\documentclass[twocolumn,
 hf,
]{ceurart}

\usepackage[utf8]{inputenc}
\usepackage{times}
\usepackage{soul}
\usepackage{url}
\usepackage[small]{caption}
\usepackage{graphicx}
\usepackage{amsmath}
\usepackage{amsthm}
\usepackage{booktabs}
\usepackage{algorithm2e}
\usepackage{tikz}
\usetikzlibrary{fadings}
\usetikzlibrary{shapes,decorations,positioning}
\usetikzlibrary{fadings,shapes,decorations,positioning,calc}
\usetikzlibrary{decorations.markings,decorations.pathreplacing}
\usetikzlibrary{shapes.geometric,automata,positioning}
\usetikzlibrary{shadows}\usetikzlibrary{calc}
\usetikzlibrary{decorations,decorations.pathmorphing,decorations.pathreplacing}
\usetikzlibrary{calc}
\usetikzlibrary{tikzmark}

\newtheorem{theorem}{Theorem}
\newtheorem{example}[theorem]{Example}
\newtheorem{definition}[theorem]{Definition}
\newtheorem{proposition}[theorem]{Proposition}

\newtheorem{corollary}[theorem]{Corollary}

\usepackage{dlcobook}
\usepackage{xspace}
\usepackage{mathrsfs} 
\usepackage{mathtools}
\usepackage{amssymb,amsmath}
\usepackage{comment}
\usepackage{stmaryrd}
\usepackage{csquotes}
\usepackage[UKenglish]{babel}
\usepackage{enumerate}
\usepackage{microtype}
\usepackage{multicol}

\usepackage{pifont}%
\newcommand{\xmark}{\ding{55}}%

\newcommand{\minOf}[2]{\ensuremath{\min(#1,#2)}}
\renewcommand{\tuple}[1]{\ensuremath{\langle{#1}\rangle}}
\newcommand{\statesOf}[1]{\ensuremath{\pmS(#1)}}
\newcommand{\modelsOf}[1]{\ensuremath{\llbracket #1\rrbracket}}

\newcommand{\pmC}{\ensuremath{\mathbb{C}}}
\newcommand{\pmM}{\ensuremath{\mathbb{M}}}
\newcommand{\pmW}{\ensuremath{\mathbb{W}}}
\newcommand{\pmS}{\ensuremath{\mathcal{S}}}
\newcommand{\pmL}{\ensuremath{\ell}}
\newcommand{\pmP}{\ensuremath{\prec}}
\newcommand{\pmR}{\ensuremath{\mathrel{R}}}

\DeclareMathOperator{\nmableitC}{\nmableit_{\mkern-2mu\pmC}}
\DeclareMathOperator{\nmableitW}{\nmableit_{\mkern-2mu\pmW}}

\ifx\nmableit\undefined
\makeatletter
\ifx\centernot\undefined
\newcommand*{\centernot}{%
	\mathpalette\@centernot
}
\fi
\def\@centernot#1#2{%
	\mathrel{%
		\rlap{%
			\settowidth\dimen@{$\m@th#1{#2}$}%
			\kern.5\dimen@
			\settowidth\dimen@{$\m@th#1=$}%
			\kern-.5\dimen@
			$\m@th#1\not$%
		}%
		{#2}%
	}%
}
\makeatother
\DeclareRobustCommand\nmableitSymb{\mathrel{|}\joinrel\sim} %
\newcommand{\nmableit}{\nmableitSymb} %
\fi

\newcommand{\problemdef}[3]{%
\begin{center}
\begin{tabular}{l@{\hskip 0.2cm}p{5.3cm}}\toprule
\textsf{\bfseries Problem:}& #1 \\\midrule
\textsf{\bfseries Input:}& #2.\\
\textsf{\bfseries Question:}& #3?\\\bottomrule
\end{tabular}
\end{center}
} \newcommand{\allModels}[1]{\mathbb{A}_{#1}}
\newcommand{\allTeams}[1]{\mathbb{T}_{#1}}

\newif\ifhideproofs

\ifhideproofs
\usepackage{environ}
\NewEnviron{hide}{}

\fi

\newcommand{\ENT}{\problemFont{\textsc{Ent}}}

\newcommand{\succinct}{\textsc{Succ}}
\newcommand{\PDL}{\logicFont{PDL}\xspace}

\renewcommand{\PL}{\ensuremath{\mathrm{PL}}\xspace}
\newcommand{\TPL}{\ensuremath{\logicFont{TPL}}\xspace}

\newcommand{\CPL}{\ensuremath{\logicFont{CPL}}\xspace}

\newcommand{\formulaOne}{\ensuremath{\varphi}}
\newcommand{\formulaTwo}{\ensuremath{\psi}}
\newcommand{\formulaThree}{\ensuremath{\gamma}}

\newcommand{\pmLogicPref}[1]{\ensuremath{{#1}\kern-0.5ex\raise0.95ex\hbox{\tiny\logicFont{pref}}}}
\newcommand{\TPLPref}{\pmLogicPref{\TPL}}
\newcommand{\CPLPref}{\pmLogicPref{\CPL}}
\newcommand{\PDLPref}{\pmLogicPref{\PDL}}

\newcommand{\pmLogicX}[2]{\ensuremath{{#1}\kern-0.5ex\raise0.95ex\hbox{\tiny{#2}}}}

\newcommand{\pmLogicCum}[1]{\ensuremath{{#1}\kern-0.5ex\raise0.95ex\hbox{\normalfont\tiny\logicFont{cuml}}}}
\newcommand{\TPLCum}{\pmLogicCum{\TPL}}
\newcommand{\CPLCum}{\pmLogicCum{\CPL}}
\newcommand{\PDLCum}{\pmLogicCum{\PDL}}

\newcommand{\pmLogicStrongCum}[1]{\ensuremath{{#1}\kern-0.5ex\raise0.95ex\hbox{\normalfont\tiny\logicFont{cuml}[\logicFont{str}]}}}

\newcommand{\pmLogicAsym}[1]{\ensuremath{{#1}\kern-0.5ex\raise0.95ex\hbox{\normalfont\tiny\logicFont{cuml}[\logicFont{as}]}}}
\newcommand{\TPLAsym}{\pmLogicAsym{\TPL}}

\newcommand{\pmLogicSystemP}[1]{\ensuremath{{#1}\kern-0.5ex\raise0.95ex\hbox{\tiny\textbf{\logicFont{p}}}}}
\newcommand{\pmLogicSystemC}[1]{\ensuremath{{#1}\kern-0.5ex\raise0.95ex\hbox{\tiny\textbf{\logicFont{c}}}}}
\newcommand{\TPLSystemP}{\pmLogicSystemP{\TPL}}
\newcommand{\CPLSystemP}{\pmLogicSystemP{\CPL}}
\newcommand{\PDLSystemP}{\pmLogicSystemP{\PDL}}
\newcommand{\TPLSystemC}{\pmLogicSystemC{\TPL}}
\newcommand{\CPLSystemC}{\pmLogicSystemC{\CPL}}
\newcommand{\PDLSystemC}{\pmLogicSystemC{\PDL}}

\DeclareMathOperator{\leqlogm}{\leq_{m}^{\mathrm{log}}}

\newcommand{\complClFont}[1]{\mathsf{#1}}
\newcommand{\Ptime}{\complClFont{P}}

\newcommand{\NC}[1]{\complClFont{NC}^{#1}}

\newcommand{\SystemC}{\textit{System~C}}
\renewcommand{\pdl}{\ensuremath{\PL(=)}}
\renewcommand{\pdl}{\ensuremath{\PL(\logicFont{dep})}}

\newcommand{\ssLogic}{\ensuremath{\mathscr{L}}}
\newcommand{\ssSystem}{\ensuremath{\mathbb{S}}}
\newcommand{\ssFormulas}{\ensuremath{\mathcal{L}}}
\newcommand{\ssInt}{\ensuremath{\Omega}}

\usepackage{todonotes}

\newcommand{\MYParagraph}[1]{\par\smallskip\noindent\textbf{#1}}
\newcommand{\MYSubsection}[1]{\MYParagraph{#1.}}

\makeatletter
\newcommand{\textlabelmarker}[1]{%
    \protected@edef\@currentlabel{#1}%
    \phantomsection%
}
\newcommand{\textlabel}[2]{%
    \textlabelmarker{#1}%
    #1\label{#2}%
}
\makeatother
\author[1]{Kai Sauerwald}[%
]
\author[2]{Juha Kontinen}[%
]
\author[3]{Arne Meier}[%
]
\address[1]{FernUniversität in Hagen, Hagen, Germany}
\address[2]{University of Helsinki, Helsinki, Finland}
\address[3]{Leibniz Universität Hannover, Hannover, Germany}

\begin{document}
\thispagestyle{plain}
\copyrightyear{2026}
\copyrightclause{Copyright for this paper by its authors.}

\conference{16th International Workshop on Logic and Computational Complexity, July 18–19, 2026, Lisbon, Portugal}

\title{On the Complexity of Entailment for\\ Cumulative Propositional Dependence Logics}

\begin{abstract}
This paper establishes and proves complexity results for entailment for cumulative propositional dependence logic and for cumulative propositional logic with team semantics. As recently shown, cumulative logics are famously characterised by System~C and exactly captured by the cumulative models of Kraus, Lehmann and Magidor. 
This gives rise to the entailment problem via relational models, which is specifically considered here.
\end{abstract}

\begin{keywords}
    KLM \sep non-monotonic logic \sep System P \sep complexity \sep entailment \sep complexity \sep team logic \sep team semantics \sep preferential reasoning
\end{keywords}

\maketitle

\section{Introduction}
\label{sec:introduction}
The ability to reason is one of the central features of intelligent agents and thus a major concern of artificial intelligence. 
In this paper, we study the fusion of non-monotonic reasoning with team-based reasoning.
Such a combination is interesting, as it allows reasoning in which one can express settings involving a plurality of objects (the team-based component) while also taking into account extra-logical information, such as plausibility, exceptionality, reliability, preference, or typicality (the non-monotonic component).
Notably, the combination of these approaches provides a setting that cannot be formalized by neither of the approaches alone.%

Less is known about how to construct non-monotonic entailment relations for team-based logics. Known approaches for non-monotonic propositional logics or non-monotonic first-order logics rely heavily on the properties of the underlying logics, such as, e.g., the availability of all Boolean connectives, the law of excluded middle, and presence of material implication. However, these properties are not (always) available in team-based logics.
 There are many approaches to non-monotonic logics that take inspiration from~\cite{KS_Brewka1997}, from which some are also generic, e.g., MAK models~\cite{KS_Makinson1989}, KLM-style reasoning~\cite{KS_KrausLehmannMagidor1990}, characterization logics~\cite{KS_BaumannStrass2025}, or approximation fixpoint theory~\cite{KS_DeneckerMarekTruszczynski2000}.

So far, there are only two pioneering works that consider a combination of team semantics and non-monotonic reasoning. First, the work by \citeauthor{JY23} (\citeyear{JY23}), which considers a non-monotonic team-based modal logic in the context of formal analysis of natural language. The second is by \citeauthor{KS_SauerwaldMeierKontinen2025} (\citeyear{KS_SauerwaldMeierKontinen2025}) (SMK), which also contains a broader introduction and motivation, and focuses on non-monotonic reasoning in the style of \citeauthor{KS_KrausLehmannMagidor1990} (\citeyear{KS_KrausLehmannMagidor1990}) (KLM). 

KLM (\citeyear{KS_KrausLehmannMagidor1990}) showed for classical propositional logic that cumulative entailment relations provide a stable theory of reasoning with multiple representations. Cumulative entailment relations are obtained by reasoning via the axiomatic system known as {\SystemC}. Furthermore, a cumulative entailment relation \( \nmableit \) can be represented by a cumulative model \( \pmC \), by which \( \formulaOne\nmableit\formulaTwo \) amounts (intuitively) to checking whether all minimal models of \( \formulaOne \) in \( \pmC \) are models of \( \formulaTwo \).
SMK (\citeyear{KS_SauerwaldMeierKontinen2025}) show that entailment via preferential models (which are specific cumulative models) satisfies {\SystemC} in the context of propositional dependence logic.
However, SMK provides no representation theorems for any kind of general cumulative reasoning in the context of team-based logics.

\MYParagraph{Contributions.}
This work 
considers the complexity of cumulative reasoning, which has been recently established as a stable approach in the context of team-based logics~\cite{KS_SauerwaldMeierKontinen2026}.
 Specifically, we deal with the cumulative counterparts of the following team-based logics:
\begin{itemize}
    \item Propositional logic with team-based semantics \hfill(\( \TPL \))
    \item Propositional dependence logic \hfill(\( \PDL \))
\end{itemize}
For each of these logics, we consider both reasoning via {\SystemC} and via cumulative models as in the approach by KLM.
This leads to {\SystemC}-based entailment relations for \( \PDL \), denoted by \( \PDLSystemC \). 
Also, we will consider \( \PDL \) entailment relations via cumulative models, denoted by \( \PDLCum \).
With \( \TPLCum \) and \( \TPLSystemC \), we denote the respective approaches for \( \TPL \).
We will study the complexity of cumulative entailment problems of these logics and obtain the following results:
\begin{enumerate}
    \item Entailment for $\PDLCum$ is in $\Theta^p_2$ and $\NP$-hard, while it is in $\Ptime$ and $\NC1$-hard for $\TPLCum$.
    \item Succinct Entailment for both logics is in $\Pi^p_2$ and $\Delta^p_2$-hard.
\end{enumerate}

This underlines (under reasonable complexity assumptions) subtle differences in the complexities of these problems regarding the underlying logics.

\section{Preliminaries}
\label{sec:background_team_based_logic}
In this paper, we consider propositional logics from a model-theoretic perspective.
We denote by $\Prop=\{\;p_i\mid i\in \mathbb{N}\;\}$ the countably infinite set of propositional variables. 
We consider propositional formulas in negation normal form, i.e., \PL-formulas are formed by the grammar, where $p\in\Prop$:
\[\formulaOne\Coloneqq p\mid \neg p \mid \bot\mid\top\mid \formulaOne\wedge\formulaOne\mid\formulaOne\vee\formulaOne.\]
 We write $\Prop(\formulaOne)$ for the set of variables occurring in $\formulaOne$.

\MYParagraph{Classical Propositional Logic ({\CPL}).}
For a non-empty finite subset $N\subseteq \Prop$ of propositional variables, one defines for valuations $v\colon N\to\{0,1\}$ over \( N \) and \PL-formulas \( \formulaOne \):
\[ \llbracket \formulaOne\rrbracket^c =
\{\, v\colon N\to\{0,1\}\mid v\models\formulaOne\}. \]
The valuation function $v$ is extended to the \emph{set of all \PL-formulas satisfying} $\Prop(\formulaOne)\subseteq N$, $\PL(N)$, in the usual way.
We denote by $\allModels{N}$ the set of all assignments over $N$. 
We write $\formulaOne\models^c\formulaTwo$ for $ \llbracket \formulaOne\rrbracket^c \subseteq \llbracket \formulaTwo\rrbracket^c $ and $\formulaOne\equiv^c\formulaTwo$ if both $\formulaOne\models^c\formulaTwo$ and $\formulaTwo\models^c\formulaOne$ are true.
%
%
%

%

%
%
%

\MYParagraph{Prop.\ Logic with Team Semantics ({\TPL}).}
Next, we define \emph{team semantics} for \PL-formulas (cf. \cite{HannulaKVV15,YangV16}). A \emph{team $X$} is a set of valuations for some finite $N\subseteq \Prop$. The \emph{domain $N$ of $X$} is denoted by  $\dom(X)$, and $\mathbb{T}_{N}$ denotes the set of all such teams.%

\begin{definition}[Team semantics of \PL]
    Let $X$ be a team. For any \PL-formula $\formulaOne$ with $\dom(X)\supseteq \Prop(\formulaOne)$, the \emph{satisfaction relation}, $X\models\formulaOne$, is defined inductively as:
    \begin{align*}
        &X\models p&&\text{if for all }v\in X: v\models p,\\
        &X\models\neg p&&\text{if for all } v\in X: v\not\models p,\\
        &X\models \top&&\text{is always the case},\\
        &X\models \bot&&\text{if } X=\emptyset,\\
        &X\models\formulaOne\wedge\formulaTwo&&\text{if } X\models\formulaOne \text{ and } X\models\formulaTwo,\\
        &X\models\formulaOne\vee\formulaTwo&&\text{if there exist } Y,Z\subseteq X\\
        &&&\text{s.t. }X=Y\cup Z, Y\models\formulaOne, \text{ and }Z\models\formulaTwo.
    \end{align*}
\end{definition}
The set of all teams \( X \) with \( X\models\formulaOne \) is denoted by \( \modelsOf{\formulaOne}^t \).
For any two \PL-formulas \( \formulaOne,\formulaTwo \), we write $\formulaOne\models^t\formulaTwo $ if \( \modelsOf{\formulaOne}^t \subseteq \modelsOf{\formulaTwo}^t \).
Write $\formulaOne\equiv^t\formulaTwo$ if both $\formulaOne\models^t\formulaTwo$ and $\formulaTwo\models^t\formulaOne$ are true.
We define the following properties for a formula~$\formulaOne$:
\begin{align*}
    X&\models \formulaOne \iff \text{for all } v\in X,~\{v\}\models\formulaOne &{\small(\textbf{Flatness})}\\
    \emptyset &\models \formulaOne &{(\textbf{Empty team})}\\
    X &\models \formulaOne \text{ and } Y\,{\subseteq}\, X, \text{ then }{Y\models \formulaOne} &\hspace*{-2em}{\small(\textbf{Downward closure})}
\end{align*}
\begin{proposition}\label{prop:tpl_pincl_properties}
    {\TPL} has the properties flatness, empty team, and downward closure. 
\end{proposition}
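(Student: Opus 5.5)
The plan is to prove flatness first, by structural induction on the \PL-formula \( \formulaOne \) in negation normal form. Empty team and downward closure then follow almost immediately. Concretely, for every team \( X \) with \( \dom(X)\supseteq\Prop(\formulaOne) \) we show
\[ X\models\formulaOne \iff \text{for all } v\in X,\ \{v\}\models\formulaOne . \]
Along the way we note that \( \{v\}\models\formulaOne \) holds iff \( v\models\formulaOne \) classically. This is not needed for the statement, but it makes the disjunction case transparent.

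\textbf{Base cases.} For \( p \) and \( \neg p \), the team clause is by definition a universal condition over \( v\in X \). So \( X\models p \) iff every \( v\in X \) satisfies \( p \), iff \( \{v\}\models p \) for every \( v\in X \); the same holds for \( \neg p \). For \( \top \) both sides are trivially true. For \( \bot \), \( X\models\bot \) iff \( X=\emptyset \). Since \( \{v\}\not\models\bot \), the right-hand side holds iff \( X \) has no elements, which is the same condition.

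\textbf{Inductive steps.} Conjunction is immediate: apply the induction hypothesis to both conjuncts and commute the two universal quantifiers. Disjunction is the only step with real content.
\begin{itemize}
\item[\( (\Rightarrow) \)] Suppose \( X=Y\cup Z \) with \( Y\models\formulaOne \) and \( Z\models\formulaTwo \), and let \( v\in X \). Then \( v\in Y \) or \( v\in Z \). By the induction hypothesis, \( \{v\}\models\formulaOne \) or \( \{v\}\models\formulaTwo \). Taking the split \( \{v\}=\{v\}\cup\emptyset \) (or \( \emptyset\cup\{v\} \)) gives \( \{v\}\models\formulaOne\vee\formulaTwo \). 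This uses \( \emptyset\models\formulaTwo \), which the induction hypothesis provides, because the right-hand side of flatness is vacuous for \( \emptyset \).
\item[\( (\Leftarrow) \)] Suppose \( \{v\}\models\formulaOne\vee\formulaTwo \) for all \( v\in X \). Then \( \{v\} \) is a union of two subteams, one of which equals \( \{v\} \) itself. By the induction hypothesis applied to subsets of \( \{v\} \), either \( \{v\}\models\formulaOne \) or \( \{v\}\models\formulaTwo \). Set \( Y=\{v\in X\mid \{v\}\models\formulaOne\} \) and \( Z=\{v\in X\mid \{v\}\models\formulaTwo\} \). Then \( X=Y\cup Z \), and the induction hypothesis (flatness for \( \formulaOne \) and \( \formulaTwo \)) gives \( Y\models\formulaOne \) and \( Z\models\formulaTwo \).
\end{itemize}

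\textbf{Consequences.} Empty team follows from flatness because the universal condition over \( v\in\emptyset \) is vacuously true. Downward closure also follows: if \( X\models\formulaOne \) and \( Y\subseteq X \), then every \( v\in Y \) lies in \( X \), so \( \{v\}\models\formulaOne \), and flatness gives \( Y\models\formulaOne \).

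\textbf{Main obstacle.} The only delicate point is the disjunction case, specifically the handling of the empty subteam in the split. Flatness for the subformulas already yields the empty-team property for them, and this is exactly what is used there, so the induction closes without needing a separate simultaneous induction.
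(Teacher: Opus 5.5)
Your argument is correct. The paper gives no proof of this proposition and treats it as standard background from the team-semantics literature it cites alongside the definition (Hannula et al., Yang and V\"a\"an\"anen); your structural induction, which proves flatness first and derives the empty-team and downward-closure properties from it, is essentially the standard proof found there.

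One small remark: in the $(\Leftarrow)$ direction of the disjunction step you do not need the induction hypothesis to get $\{v\}\models\formulaOne$ or $\{v\}\models\formulaTwo$. One of the two subteams covering $\{v\}$ must literally be $\{v\}$, so that conclusion follows directly.
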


Due to the flatness property, logical entailment of propositional logic with team-based semantics $\models^t$ and %
classical semantics $\models^c$ coincide. 
\MYParagraph{Propositional Dependence Logic ({\PDL}).}
\label{sec:pdl_pincl}
A \emph{(propositional) dependence atom} is a string %
$\dep{\vec{a},b}$, 
in which $\vec{a}=a_1,\dots,a_k$ and $b$ are propositional variables from \Prop. 
A team $X$ \emph{satisfies a dependence atom}, $X \models \dep{\vec{a},b}$, if for all $v, v' \in X$, $v(\vec{a})=v'(\vec{a})$ implies $v(b)=v'(b)$.
A dependence atom where the first component is empty will be abbreviated as $\dep{p}$ and called a \emph{constancy atom}. %
The language of \emph{propositional dependence logic} (\pdl) is defined as \PL-formulas extended by dependence atoms.
\begin{example}\label{example_dep_atm_propositional}
    Consider the team $X$ over $\{p,q,r\}$ defined by:\smallskip

        \noindent\begin{minipage}[c]{0.35\linewidth}
        \begin{tabular}{cccc}
            \toprule
            &$p$&$q$&$r$\\\midrule
            $v_1$&$1$&$0$&$0$\\
            $v_2$&$0$&$1$&$0$\\
            $v_3$&$0$&$1$&$0$\\
            \bottomrule
        \end{tabular}
        \end{minipage}
    \begin{minipage}[c]{0.64\linewidth}
    Here, we have that $X\models\dep{p,q}$ and $X\models\dep{r}$. 
    Moreover, we have that $X\models\dep{p}\vee \dep{p}$, however it is true that $X\not \models\dep{p}$ as the value of $p$ is not overall constant.
    \end{minipage}
\end{example}

%
%
\begin{proposition}\label{prop:pdl_pincl_properties}
    \PDL has the empty team and the downward closure property, but not the flatness property. 
\end{proposition}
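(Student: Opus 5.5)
The plan is a structural induction on \PDL-formulas for the two positive claims, and an explicit counterexample for the failure of flatness.

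\textbf{Empty team.} Check each clause of the team semantics at $X=\emptyset$:
\begin{itemize}
\item For $p$ and $\neg p$ the universal condition over $v\in\emptyset$ holds vacuously.
\item $\top$ holds by definition, and $\bot$ holds since $X=\emptyset$.
\item A dependence atom $\dep{\vec{a},b}$ holds vacuously, since there are no $v,v'\in\emptyset$.
\item For $\formulaOne\wedge\formulaTwo$, use the induction hypothesis on both conjuncts.
\item For $\formulaOne\vee\formulaTwo$, choose $Y=Z=\emptyset$ and apply the induction hypothesis.
\end{itemize}

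\textbf{Downward closure.} Fix $Y\subseteq X$ with $X\models\formulaOne$ and show $Y\models\formulaOne$ by induction:
\begin{itemize}
\item For literals, the condition ``for all $v\in X$'' restricts to $Y$.
\item For $\top$ there is nothing to show. For $\bot$, $X=\emptyset$ forces $Y=\emptyset$.
\item For $\dep{\vec{a},b}$, any pair $v,v'\in Y$ is also a pair in $X$, so the functional-dependence condition carries over.
\item Conjunction is immediate from the induction hypothesis.
\item For disjunction, take the witnesses $X=X_1\cup X_2$ with $X_1\models\formulaOne$ and $X_2\models\formulaTwo$. Set $Y_i=X_i\cap Y$. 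Then $Y=Y_1\cup Y_2$ and $Y_i\subseteq X_i$, so the induction hypothesis gives $Y_1\models\formulaOne$ and $Y_2\models\formulaTwo$.
\end{itemize}
The disjunction case is the only step needing any construction, and intersecting the witnesses with $Y$ handles it.

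\textbf{Failure of flatness.} Use Example~\ref{example_dep_atm_propositional}. Each singleton $\{v_i\}$ satisfies $\dep{p}$, since a single valuation trivially has a constant value of $p$. The whole team $X$ does not satisfy $\dep{p}$, because $v_1(p)=1\neq 0=v_2(p)$. So $\dep{p}$ is a \PDL-formula for which the flatness equivalence fails.
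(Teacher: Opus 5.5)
Your proof is correct. The structural induction covers every case of the negation-normal-form grammar together with dependence atoms. Taking $Y_i = X_i \cap Y$ is exactly the construction the disjunction case of downward closure needs. The constancy atom on the team of Example~\ref{example_dep_atm_propositional} (every singleton satisfies it, the whole team does not) refutes flatness. The paper gives no proof of this proposition and treats it as a standard background fact from the team-semantics literature; its Example already points to the same counterexample, so your argument is the expected one.
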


\MYParagraph{Generic View on Logics.} 
Some parts of this paper require a generic perspective on logics, which we discuss next.
A \emph{satisfaction system} is a triple \( \ssSystem = \tuple{\ssFormulas,\ssInt,\models} \), where \( \ssFormulas \) is the set of \emph{formulas}, \( \ssInt \) is the set of \emph{interpretations}, and  \( {\models} \subseteq \ssInt \times \ssFormulas \) is the \emph{satisfaction relation}. 
We write $\modelsOf{\formulaOne}^\ssSystem=\{ \omega \in \ssInt \mid \omega \models \formulaOne \} $ for the set of all models of the formula $\alpha\in\ssFormulas$.
An \emph{entailment relation} for a satisfaction system is a relation \( {\Vdash} \subseteq \ssFormulas \times \ssFormulas\) such that %
$\alpha \Vdash \gamma$ if and only if $\beta \Vdash \gamma$, 
whenever $\modelsOf{\formulaOne}^\ssSystem=\modelsOf{\formulaTwo}^\ssSystem $.
A satisfaction system \( \ssSystem \) together with an entailment relation \( \Vdash \) is called a \emph{logic} and  denoted by
\( \ssLogic = \tuple{\ssFormulas,\ssInt,\models,\Vdash} \).
The propositional logics discussed in this section fit into this general model-theoretic view for each \( N \subseteq \Prop \) as follows:
\begin{center}
    \begin{tabular}{@{}l@{\hskip0.05em}l@{}}
        \( \CPL_{N} \)  
        & \( {=}\, \tuple{\PL(N),\allModels{N},\models,\models^c} \)\\
        \( \TPL_{N} \) 
        & \( {=}\, \tuple{\PL(N),\allTeams{N},\models,\models^t} \)\\
        \( \PDL_{N} \) 
        & \( {=}\, \tuple{\pdlN, \allTeams{\!N},\models,\models^t}\)
    \end{tabular}
\end{center}
When there is no ambiguity, we will write $\models$ instead of $\models^t$.
Moreover, we will use \( \CPL \) to denote the class consisting of all logics \( \CPL_{N} \) for any non-empty finite subset $N\subseteq \Prop$.
The classes \( \TPL \) and \( \PDL \) are defined analogous.

\section{Cumulative Reasoning}
\label{sec:background_klmstyle}
We consider the construction of  entailment relations.

\MYSubsection{{\SystemC}}
 We make use of the following rules for calculi:
\begin{center}
	\vspace{-1em}
	\begin{minipage}[b]{0.51\linewidth}
		\begin{align}
			&\frac{\formulaOne\equiv\formulaTwo\hspace{0.5cm}\formulaOne\nmableit\formulaThree}{\formulaTwo\nmableit\formulaThree} \tag{LLE}\label{pstl:LLE}\\[0.25em]
			&\frac{\formulaOne\land\formulaTwo\nmableit\formulaThree\hspace{0.5cm}\formulaOne\nmableit\formulaTwo}{\formulaOne\nmableit\formulaThree} \tag{Cut}\label{pstl:Cut}
		\end{align}
	\end{minipage}\begin{minipage}[b]{0.45\linewidth}
		\begin{align}
			&\frac{\formulaOne\models\formulaTwo\hspace{0.5cm}\formulaThree\nmableit\formulaOne}{\formulaThree\nmableit\formulaTwo} \tag{RW}\label{pstl:RW}\\[0.25em]
			&\frac{\formulaOne\nmableit\formulaTwo\hspace{0.5cm}\formulaOne\nmableit\formulaThree}{\formulaOne\land\formulaTwo\nmableit\formulaThree} \tag{CM}\label{pstl:CM}
		\end{align}
	\end{minipage}
\end{center}
Note that \( \models \) is a placeholder for the entailment relation \( \Vdash \) of an underlying logic \( \ssLogic = \tuple{\ssFormulas,\ssInt,\models,\Vdash}  \), and \( \equiv \) is the respective semantic equivalence from $\ssLogic$.
{\SystemC}  consists of the rules  \eqref{pstl:RW}, \eqref{pstl:LLE},  \eqref{pstl:CM}, \eqref{pstl:Cut} and reflexivity  (\textlabel{Ref}{pstl:Ref}) $\formulaOne \nmableit \formulaOne$ for all formulas \cite{KS_KrausLehmannMagidor1990,KS_Gabbay1984}.
We say that an entailment relation \( \nmableit \) \emph{satisfies} {\SystemC} if \( \nmableit \) is closed under all rules of {\SystemC}.

\MYSubsection{Relational Models}
For a relation \( {\pmR} \subseteq \mathcal{S} \times \mathcal{S} \) on a set \( \mathcal{S} \) and a subset \( S \subseteq \mathcal{S} \), an element \( s \in S \) is called \emph{minimal in \( S \) with respect to \( \pmR \)} if for each \( s' \in S \) holds  \( \lnot(s' \pmR s) \). 
Then, \( \minOf{S}{{\pmR}} \)  is the set of all \( s \in S \)  that are minimal in \( S \) with respect to \( \pmR \).
Moreover, for a set of interpretations \( M \) and a formula \( \varphi \) of  \( \ssLogic = \tuple{\ssFormulas,\ssInt,\models,\Vdash}  \), we write \( M \models \varphi \) if for all \( \omega\in M \) we have that \( \omega \models \varphi \).
\begin{definition}[Shoman~\citeyear{KS_Shoham1988}, Dix and Makinson~\citeyear{KS_Makinson1989,KS_DixMakinson1992}]
	Let \( \ssLogic = \tuple{\ssFormulas,\ssInt,\models,\Vdash}  \) be a logic.
	A \emph{relational model} for \(  \ssLogic  \) is a triple \( \pmM=\tuple{\pmS,\pmL,\pmR} \) where \( \pmS \) is a set, \( \pmL \colon \pmS \to \mathcal{P}(\Omega) \), and \( \pmR \) is a binary relation on \( \pmS \).
\end{definition}
We say \( S \subseteq \pmS \) is \emph{smooth} if for each \( s \in S \), we either have that \( s\in \minOf{S}{\pmR} \), or there exists a state \( s'\in \minOf{S}{\pmR} \) with \( s' \pmR s \).
For \( \formulaOne \in \ssFormulas \) and \( s\in\pmS \), we denote by \( \statesOf{\formulaOne}=\{\, s \in \pmS \mid \ell(s) \models \formulaOne\, \} \) the set of states that satisfy \( \formulaOne \). %
With \( \minOf{\modelsOf{\formulaOne}}{\pmR} = \bigcup\{\, \pmL(s) \mid s\in \minOf{\statesOf{\formulaOne}}{\pmR}  \,\} \), we denote the set of all interpretations that appear in \( \pmL(s) \) for any minimal state \( s \) that satisfy \( \formulaOne \).
We will deal, in this paper, with specific types of relational models, which we define in the following.

\begin{definition}
	A relational model \( \pmC=\tuple{\pmS,\pmL,\pmR} \) for a logic \( \ssLogic = \tuple{\ssFormulas,\ssInt,\models,\Vdash}  \) is called 
    \emph{cumulative} if for all \( \formulaOne \in \ssFormulas \) the set \( \statesOf{\formulaOne} \) is smooth.
    A cumulative model \( \pmC \) is \emph{strong} if \( \pmR \) is asymmetric ($x R y$ implies $\lnot(y R x)$) and \( \minOf{\statesOf{\formulaOne}}{\pmR} \) has exactly one element for each formula \( \formulaOne \in \ssFormulas \).
\end{definition}
Because cumulative models satisfy smoothness, they avoid the problem of reasoning via arbitrary relational models, in which the set $\minOf{\modelsOf{\formulaOne}}{\pmR}$ might be empty, even when $\statesOf{\formulaOne}$ is non-empty. 
Entailment relations, which are induced by a relational model, are defined as follows.
\begin{definition}
	Let \( \ssLogic = \tuple{\ssFormulas,\ssInt,\models,\Vdash}  \) be a logic.
	The \emph{entailment relation} \( {\nmableit_{\pmM}} \subseteq \ssFormulas \times \ssFormulas \) for a relational model \( \pmM=\tuple{\pmS,\pmL,\pmR} \) for \(  \ssLogic \) is given by
	\begin{equation*}
		\formulaOne \nmableit_{\pmM} \formulaTwo \text{ if }  \minOf{\modelsOf{\formulaOne}}{\pmR} \subseteq \modelsOf{\formulaTwo}\ .
	\end{equation*}
\end{definition}
An entailment relation \( {\nmableit} \subseteq \ssFormulas \times \ssFormulas \) is called \emph{(strongly) cumulative} if there is a (strong) cumulative model \( \pmC \) for \( \ssLogic \) such that  \( {\nmableit} =   {\nmableitC}  \). 

\MYSubsection{Classes of Entailment Relations}
If \( \logicFont{L} \) is a class of logics (such as \PDL, \TPL, or \CPL),  we denote with \( \pmLogicSystemC{\logicFont{L}} \)  the class of all entailment relations \( {\nmableit} \) for \( \tuple{\ssFormulas,\ssInt,\models,\Vdash} \in \logicFont{L} \) that satisfy {\SystemC}. 
Similarly, we use  \( \pmLogicCum{\logicFont{L}} \) (\( \pmLogicStrongCum{\logicFont{L}} \)) for the class of all (strong) cumulative entailment relations.

\section{Complexity of Cumulative Entailments}
\label{sec:complexity_cuml}

\begin{table}[tb]
\centering
\begin{tabular}{l@{}c@{}cl}
     \toprule
     \textbf{Problem}                             & \textbf{Tractable} &       \textbf{Complexity}       & \textbf{Result}                         \\ \midrule
     $\ENT(\PDLCum)$                &  \xmark  &   $\in\Theta_2^p$, $\NP$-hard   & Thm.~\ref{thm:pdlcuml}                 \\
     $\succinct\ENT(\PDLCum)$       &  \xmark   & $\in\Pi^p_2$, $\Delta^p_2$-hard & Thm.~\ref{thm:succ-PDLcuml} \\
     $\ENT(\TPLCum)$               &     \checkmark      &   $\in\Ptime$, $\NC{1}$-hard    & Col.~\ref{cor:tplcuml}                 \\
     $\succinct\ENT(\TPLCum)$      &  \xmark   & $\in\Pi^p_2$, $\Delta^p_2$-hard & Col.~\ref{cor:tplcuml}                 \\ \bottomrule
 \end{tabular}
\caption{Overview of novel complexity results for entailment based on cumulative models for $\PDL$ and $\TPL$. 
     $\ENT$/$\succinct\ENT$ are the (succinct) entailment problem.}
\label{tab:compl-overview}
\end{table}

In this section, we study the complexity of entailment for cumulative preferential logics. We consider both the standard and the succinct version of the problem, where the latter uses a circuit representation of the knowledge base.
Recall, that a relational model \( \pmC=\tuple{\pmS,\pmL,\pmR} \) for a logic \( \ssLogic = \tuple{\ssFormulas,\ssInt,\models,\models^{\ssLogic}}  \) is cumulative if for every formula \(\varphi\in\ssFormulas\) the set \( \statesOf{\formulaOne} \) is \emph{smooth}, i.e., for each \( s \in \statesOf{\formulaOne} \) we have either	\( s\in \minOf{\statesOf{\formulaOne}}{\pmR} \) or there exists a state \( s'\in \minOf{\statesOf{\formulaOne}}{\pmR} \) with \( s' \pmR s \). 
In the following, we will turn towards the central decision problems. 
We start with the standard entailment problem for cumulative propositional logic.
\problemdef{$\ENT(\PDLCum)$ --- entailment problem for cumulative propositional dependence logic}{A finite cumulative model $\pmW = \tuple{\pmS,\pmL,\pmP}$ for $\PDL$ and $\formulaOne,\formulaTwo\in\PL$}{Is it true that $\formulaOne\nmableitW\formulaTwo$}

\begin{theorem}
     $\ENT(\PDLCum)$ is in $\Theta^p_2$ and $\NP$-hard.\label{thm:pdlcuml}
\end{theorem}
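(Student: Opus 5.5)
The plan is a truth-table reduction to NP for membership and a reduction from model checking for hardness. Throughout I read the input formulas as $\PDL$-formulas; if they were restricted to $\PL$, flatness (Proposition~\ref{prop:tpl_pincl_properties}) would make every satisfaction test polynomial. The input model $\pmW=\tuple{\pmS,\pmL,\pmP}$ is finite and given explicitly: each label $\pmL(s)$ is a finite list of teams over some $N$, and $\pmP$ is a list of pairs. A state $s$ is in $\statesOf{\formulaOne}$ iff $X\models\formulaOne$ for every team $X\in\pmL(s)$. We then have $\formulaOne\nmableitW\formulaTwo$ iff every team $X\in\pmL(s)$, for every $s\in\minOf{\statesOf{\formulaOne}}{\pmP}$, satisfies $\formulaTwo$. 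The only expensive primitive is therefore team model checking for $\PDL$, i.e.\ deciding $X\models\theta$. This is in $\NP$: guess the splits witnessing each disjunction, then check literals and dependence atoms in polynomial time.

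\textbf{Membership in $\Theta^p_2$.} I would use the characterisation $\Theta^p_2=\Ptime^{\NP}_{\|}$, i.e.\ polynomial time with one round of parallel (non-adaptive) $\NP$ queries. The algorithm first lists all pairs $(s,X)$ with $s\in\pmS$ and $X\in\pmL(s)$; there are polynomially many in the input size. For each pair it asks, in parallel, the two $\NP$ queries ``$X\models\formulaOne$?'' and ``$X\models\formulaTwo$?''. From the answers it computes $\statesOf{\formulaOne}$ deterministically. It then computes $\minOf{\statesOf{\formulaOne}}{\pmP}$ by scanning the explicit relation $\pmP$. Finally it accepts iff every team labelling a minimal state received a positive answer for $\formulaTwo$. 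All queries are fixed before any answer is seen, so the procedure is non-adaptive, which gives $\Theta^p_2$. Smoothness is irrelevant for the upper bound: it is a promise on the input, not something the algorithm checks. The one point needing care is the query set. It must contain both the $\formulaOne$- and the $\formulaTwo$-queries for all teams, because the relevant $\formulaTwo$-queries depend on the $\formulaOne$-answers and we are not allowed to wait for those.

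\textbf{$\NP$-hardness.} I would reduce from model checking for propositional dependence logic, which is $\NP$-complete: given a team $X$ and a $\PDL$-formula $\theta$, decide whether $X\models\theta$. The reduction builds the single-state model $\pmW=\tuple{\{s\},\pmL,\emptyset}$ with $\pmL(s)=\{X\}$, and sets $\formulaOne=\top$ and $\formulaTwo=\theta$.
\begin{itemize}
\item \emph{Cumulativity.} The relation is empty, so every subset of $\pmS$ is smooth and $\pmW$ is cumulative. It is even strong whenever the relevant minimal sets are singletons.
\item \emph{Correctness.} Since $X\models\top$, we get $\minOf{\statesOf{\top}}{\pmP}=\{s\}$. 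Hence $\top\nmableitW\theta$ iff $X\models\theta$.
\end{itemize}
The construction is clearly logspace, which gives $\NP$-hardness.

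\textbf{Main obstacle.} I expect the delicate step to be the membership argument, in two respects. First, the algorithm must be arranged with genuinely parallel queries rather than adaptively querying only the minimal states, since the adaptive version only gives $\Delta^p_2$. Second, the $\NP$ upper bound for $\PDL$ team model checking must hold for the explicit team representation. Both appear routine, and no matching $\Theta^p_2$ lower bound is attempted.
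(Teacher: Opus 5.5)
Your proposal is correct and follows essentially the same route as the paper. For membership you both use $\NP$-oracle model checking of $\formulaOne$ and $\formulaTwo$ on the labelling teams, with a polynomial-time computation of the minimal states. For hardness you both reduce from $\PDL$ model checking via the single-state model with empty relation, $\formulaOne=\top$ and $\formulaTwo$ the given formula. As you note, this reduction only works if the input formulas are read as $\PDL$-formulas rather than $\PL$-formulas.

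Your single round of queries is a clean way to make the $\Theta^p_2$ bound explicit. The paper instead queries $\formulaTwo$ only on the minimal states, after the $\formulaOne$-round. Contrary to your remark, that two-round version is still in $\Theta^p_2$ rather than only $\Delta^p_2$, because constantly many rounds of parallel $\NP$ queries collapse to one (Buss--Hay).
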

\begin{proof}
     For membership, we make use of the cumulative input model \(\pmC=\tuple{\pmS,\pmL,\pmR}\). 
     Regarding all \( s\in \pmS(\varphi)\) we first need to compute $\statesOf{\formulaOne}$ which is the set of all formulas such that $\ell(s)\models\varphi$. 
     As the model checking problem for $\PDL$ is $\NP$-complete \cite[Thm.~1]{DBLP:conf/sofsem/EbbingL12}, we need to go through all states \( s\in \pmS \) and check whether \( \ell(s)\models\varphi \) holds.
     This can be done with an $\NP$ oracle.
     Next, we need to compute the minimal states $\minOf{\statesOf{\formulaOne}}{\pmR}$. 
     This can be done by checking for each state \( s\in \statesOf{\formulaOne} \) whether there is a state \( s' \in \statesOf{\formulaOne} \) such that \( s' \pmR s \) holds.
     This can be done in polynomial time in the size of the model.
     To identify the minimal states, we only need to identify the sinks in the graph induced by \( \pmR \) on \( \statesOf{\formulaOne} \).
     Finally, we need to check whether for all minimal states \( s\in \minOf{\statesOf{\formulaOne}}{\pmR} \) it holds that \( \ell(s)\models\formulaTwo \). 
     Again, this can be done with an $\NP$ oracle.
     Thus, overall we can decide $\formulaOne\nmableitW\formulaTwo$ in $\Theta^p_2$. 
     Notice that this upper bound is in line with the upper bound for the preferential logic $\PDLPref$, which has to be considered as a special case of cumulative models.
     
     The lower bound follows from the underlying classical complexity of model checking for $\PDL$ by constructing a simple cumulative model \( (T,\formulaOne)\mapsto ((\{T\},\mathrm{id}_\pmS,\emptyset),\top,\formulaOne) \) which is already preferential.
\end{proof}

For preferential models circuit families of size $(2^{n})^{O(1)}$ for $\succinct\ENT(\PDLPref)$ have been considered~\cite{KS_SauerwaldMeierKontinen2025}. 
This was correct, since in such models, a state maps to a team. 
For cumulative models, a state maps to a set of teams, requiring a further exponential jump in the representation, i.e., we need to consider circuit families of size $(2^{2^n})^{O(1)}$ for $\succinct\ENT(\PDLCum)$.
To that end, we will next define an adequate notion that appropriately addresses these large models.

\begin{definition}
    Let $N\subseteq\Prop$ be a set of propositions with $|N|=n$, $\pmS=\{0,1\}^m$ be a set for $m\in (2^{2^n})^{O(1)}$, and ${\pmP}\subseteq\pmS\times\pmS$ be a strict partial order. 
    Now let $\pmW$ be a cumulative model $\pmC=\tuple{\pmS,\pmL,\pmP}$ such that $\pmL\colon\pmS\rightharpoonup\mathcal{P}(\allTeams{N})$ is a partial labelling. 
    Let there be two $n^{O(1)}$-sized circuit families $\mathcal L,\mathcal O$ (labelling, ordering) such that:
    \begin{enumerate}
        \item $\ell$ is computed by $\mathcal L$,
        \item $\mathcal O\colon\pmS\times\pmS\rightharpoonup\{0,1\}$ is a partial function such that for $s,s'\in \pmS$, the circuit outputs $1$ if and only if $s\pmP s'$ is true.
    \end{enumerate}
    We call $(\mathcal L,\mathcal O)$ an \emph{$n^{O(1)}$-sized circuit representation} of $\pmW$.
\end{definition}
We can derive from this definition the succinct entailment problem for cumulative propositional dependence logic, $\succinct\ENT(\PDLCum)$, which is the problem considering only instances that have a $n^{O(1)}$-sized circuit representation of the preferential model, i.e., the input then is of the form $\tuple{\tuple{\mathcal O,\mathcal L},\formulaOne,\formulaTwo}$.

\begin{theorem}
    $\succinct\ENT(\PDLCum)$ is in $\Pi^p_2$ and $\Delta^p_2$-hard 
    under
    $\leqlogm$-reductions.\label{thm:succ-PDLcuml}
\end{theorem}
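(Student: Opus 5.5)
For the upper bound we show that the complement, $\formulaOne\not\nmableitW\formulaTwo$, is in $\Sigma^p_2$. By definition, $\formulaOne\not\nmableitW\formulaTwo$ holds iff there are a state $s\in\pmS$ and a team $X\in\pmL(s)$ such that three conditions hold: $\pmL(s)\models\formulaOne$, $s$ is $\pmP$-minimal in $\statesOf{\formulaOne}$, and $X\not\models\formulaTwo$. States are bit strings of length $m\in(2^{2^n})^{O(1)}$. Since the circuits in $(\mathcal L,\mathcal O)$ have size $n^{O(1)}$, we read the circuit representation so that a state is encoded by the polynomially many input bits the circuits actually use. Likewise, a team over $N$ together with its index in $\pmL(s)$ is given by a polynomial-size query to $\mathcal L$.

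The $\Sigma^p_2$ algorithm is as follows. Existentially guess $s$ and a member $X$ of $\pmL(s)$, and check $X\not\models\formulaTwo$. This is a coNP condition, so we fold it in by additionally guessing nothing and checking it in the universal phase: $X\not\models\formulaTwo$ means that every split witness fails. Then universally quantify over three things. First, every member $Y$ of $\pmL(s)$, together with a split certificate, checking $Y\models\formulaOne$. Here the guessed certificate for $\PDL$ model checking must be existential, so we instead existentially guess the certificates for all $Y$. Second, every state $s'$ with $s'\pmP s$, checking $\pmL(s')\not\models\formulaOne$, i.e.\ the existence of some $Y'\in\pmL(s')$ with $Y'\not\models\formulaOne$. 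Third, every split for $X$.

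\textbf{Main obstacle.} The quantifier alternations must be arranged so that everything fits into $\exists\forall$. The delicate points are that $\pmL(s)\models\formulaOne$ is a universal-over-members, existential-over-splits statement, and that minimality is a universal-over-$s'$, existential-over-$Y'$, universal-over-splits statement. We handle both using the fact that $\PDL$ model checking on an explicit team is in $\NP$ (cf.\ the proof of Theorem~\ref{thm:pdlcuml}). For teams of size $2^n$, which is exponential in the input, this check has to be performed relative to the circuit, as in the preferential case of SMK. If the exact counting fails, we fall back on the preferential argument of~\cite{KS_SauerwaldMeierKontinen2025} and adapt it by treating sets of teams as an additional universal layer, using the bounded representation given by $\mathcal L$.

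For $\Delta^p_2$-hardness, we reduce from the succinct entailment problem for preferential $\PDL$ models, $\succinct\ENT(\PDLPref)$, which~\cite{KS_SauerwaldMeierKontinen2025} shows to be $\Delta^p_2$-hard. The reduction maps a preferential circuit representation to a cumulative one in which each state $s$ is labelled with the singleton $\{\ell(s)\}$. It wraps the labelling circuit accordingly and keeps $\mathcal O$ and $\formulaOne,\formulaTwo$ unchanged. This is a logspace transformation. Preferential models are cumulative and smooth, and the minimal states and the induced entailment coincide, so hardness transfers under $\leqlogm$.
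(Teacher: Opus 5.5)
Your hardness half is fine and is the route the paper takes. The paper gives no separate proof of this theorem; its only written argument, in the proof of Theorem~\ref{cor:tplcuml}, is that preferential models sit inside cumulative ones. Labelling each state with a singleton leaves $\statesOf{\formulaOne}$, its minimal elements and hence $\nmableitW$ unchanged, so $\Delta^p_2$-hardness transfers.

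\textbf{The membership half has a genuine gap.} Written out, non-entailment says
\begin{align*}
\exists s\,\Bigl[&\forall Y\in\pmL(s)\,\exists c\,\mathrm{cert}(c,Y)\\
&\wedge\ \forall s'\,\bigl(s'\pmP s\rightarrow\exists Y'\in\pmL(s')\,\forall c'\,\lnot\mathrm{cert}(c',Y')\bigr)\\
&\wedge\ \exists X\in\pmL(s)\,\forall d\,\lnot\mathrm{cert}_{\formulaTwo}(d,X)\Bigr],
\end{align*}
where $\mathrm{cert}(c,Y)$ says that $c$ is a split witness for $Y\models\formulaOne$, and $\mathrm{cert}_{\formulaTwo}$ is the same for $\formulaTwo$. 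Collapsing this into one $\exists\forall$ block needs two quantifier exchanges, which you state but never justify. The first is pulling $\forall Y\,\exists c$ into the outer existential block (``guess the certificates for all $Y$''). The second is moving $\exists Y'$ below $\forall s'$ into the universal block. Appealing to ``$\PDL$ model checking is in $\NP$'' achieves neither. As written, your minimality test is $\forall s'\,\exists Y'\,\forall c'$, so the whole condition is $\exists\forall\exists\forall$-shaped, i.e.\ $\Sigma^p_4$ rather than $\Sigma^p_2$.

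\textbf{The missing ingredient is a polynomial bound on the labels.} Each $\pmL(s)$ must consist of polynomially many teams $Y_1,\dots,Y_k$, each of polynomial size; this is what you get if $\mathcal L$ outputs $\pmL(s)$ explicitly. Under that bound both exchanges go through:
\begin{itemize}
\item The existential block guesses $s$, $X$, and one certificate $c_i$ for each $Y_i\in\pmL(s)$.
\item Minimality becomes a $\mathrm{coNP}$ condition via $\exists i\,\forall c'\,\lnot\mathrm{cert}(c',Y'_i)\equiv\forall(c'_1,\dots,c'_k)\,\exists i\,\lnot\mathrm{cert}(c'_i,Y'_i)$. Concretely, universally quantify over $s'$ together with a tuple of candidate witnesses, one per member of $\pmL(s')$. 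Then check deterministically that $\lnot(s'\pmP s)$ or that some $c'_i$ fails.
\end{itemize}

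You instead leave open the case of teams of size $2^n$. There certificates are exponentially long and neither exchange is available. Your fallback of ``an additional universal layer'' over sets of teams places an existential certificate below a universal quantifier, giving $\exists\forall\exists$, not $\exists\forall$. The paper's own sketch (a $\forall\exists$ check that no minimal violating state exists) is the same high-level idea and tacitly relies on exactly this boundedness. You should commit to the bound explicitly and carry out the exchange above, rather than hedging with ``if the exact counting fails''.
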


\paragraph{Entailment for cumulative preferential logics with team semantics.} 
We derive the natural entailment problem for cumulative propositional logic with team semantics from the one for cumulative propositional dependence logic. 

\problemdef{$\ENT(\TPLCum)$ --- entailment problem for cumulative propositional logic with team-based semantics}{A finite cumulative model $\pmW = \tuple{\pmS,\pmL,\pmP}$ for \TPL and $\formulaOne,\formulaTwo\in\PL$}{Is it true that $\formulaOne\nmableitW\formulaTwo$}

This results in a better complexity for the non-succinct version.

\pagebreak[3]%
\begin{theorem} The following holds:
    \begin{enumerate}
        \item $\ENT(\TPLCum)\in\Ptime$ and is $\NC{1}$-hard.
        \item $\succinct\ENT(\TPLCum)\in\Pi_2^p$ and is $\Delta_2^p$-hard under $\leqlogm$-reductions.
    \end{enumerate}\label{cor:tplcuml}
\end{theorem}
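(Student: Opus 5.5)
For part~1, I would follow the membership argument of Theorem~\ref{thm:pdlcuml}, replacing the $\NP$ oracle by a polynomial-time test. The key point is that a $\PL$-formula is flat under team semantics (Proposition~\ref{prop:tpl_pincl_properties}). Hence, for a team $X$ and $\formulaOne\in\PL$, $X\models\formulaOne$ holds iff every $v\in X$ classically satisfies $\formulaOne$. Classical formula evaluation is in $\NC{1}$, and so is checking it for all valuations in $X$, which is polynomial in the input.

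The algorithm then runs in four steps. First, for every state $s\in\pmS$, compute whether $\pmL(s)\models\formulaOne$, which means checking every team in $\pmL(s)$ and every valuation in each team. This takes polynomial time in the size of the explicit model, since the labels are written out. Second, compute $\minOf{\statesOf{\formulaOne}}{\pmR}$ by scanning, for each $s\in\statesOf{\formulaOne}$, all $s'\in\statesOf{\formulaOne}$ for $s'\pmR s$. Third, check that every team in every minimal label satisfies $\formulaTwo$, again by flatness. Fourth, accept iff the third step succeeds. Every step is polynomial, so $\ENT(\TPLCum)\in\Ptime$.

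For $\NC{1}$-hardness, I would reduce from the Boolean formula value problem, which is $\NC{1}$-complete. Given a formula $\formulaTwo$ and an assignment $v$, output the one-state model $\tuple{\{s\},\pmL,\emptyset}$ with $\pmL(s)=\{\{v\}\}$, together with the pair $(\top,\formulaTwo)$. This model is trivially cumulative, since every subset of a one-element set with empty relation is smooth. Then $\top\nmableitW\formulaTwo$ iff $\{v\}\models\formulaTwo$ iff $v\models\formulaTwo$. A minor technicality is that formulas are in negation normal form; the formula value problem stays $\NC{1}$-hard for such formulas, since negations can be pushed inward by a constant-depth transformation. The reduction is clearly computable in $\NC{1}$, and in fact in $\mathsf{AC}^0$.

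For part~2, membership in $\Pi_2^p$ should carry over from Theorem~\ref{thm:succ-PDLcuml}, because every $\PL$-formula is a $\PDL$-formula and the semantics agree. Alternatively, one can state the universal-then-check structure directly: for all states $s$, teams, and valuations, either $s$ is not minimal in $\statesOf{\formulaOne}$ or the label satisfies $\formulaTwo$. Deciding non-minimality involves an inner quantifier over states, but the bound is inherited from the $\PDL$ case. For $\Delta_2^p$-hardness, I would check that the reduction used for Theorem~\ref{thm:succ-PDLcuml} only uses $\PL$-formulas, with no dependence atoms. I expect this to be the main obstacle, because I do not know that reduction's exact form. If it does use atoms, I would redo it from a $\Delta_2^p$-complete problem such as the lexicographically maximal satisfying assignment. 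In that construction, states encode assignments ordered lexicographically via the circuit $\mathcal O$, labels are singleton teams given by $\mathcal L$, and the formulas are purely classical. Flatness again makes team satisfaction coincide with classical satisfaction, so the reduction goes through for $\TPLCum$.
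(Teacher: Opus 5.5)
Part~1 is essentially the paper's argument. The paper also replaces the $\NP$ oracle from Theorem~\ref{thm:pdlcuml} by polynomial-time model checking for $\TPL$: it cites Ebbing and Lohmann, where you use flatness directly. It takes the $\NC{1}$ lower bound from classical formula evaluation, which your one-state model makes explicit.

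Your flatness argument actually gives more than the paper states. Every test is an $\NC{1}$ evaluation of $\formulaOne$ or $\formulaTwo$ on an explicitly listed valuation, and minimality only inspects one-step predecessors. So the procedure is a polynomial fan-in Boolean combination of $\NC{1}$ tests. For a reasonable encoding this makes $\ENT(\TPLCum)$ $\NC{1}$-complete rather than merely in $\Ptime$.

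One side justification should be replaced. Conversion to negation normal form is not a constant-depth computation, since the polarity of a literal is the parity of the negations above it. Evaluation of negation-normal-form formulas is nonetheless $\NC{1}$-hard, because the standard hardness proof already produces formulas with negations only at the leaves.

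Part~2 follows a different route in both directions.

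For membership, the paper argues directly with $\forall\exists$-nondeterminism: universally choose a state $s$; if $\pmL(s)$ satisfies $\formulaOne$ but not $\formulaTwo$, existentially guess some $s'\pmP s$ in $\statesOf{\formulaOne}$. This is the direct sketch you leave unfinished and defer to inheritance. You instead embed the problem into $\succinct\ENT(\PDLCum)$ by the identity map. That is shorter, but it rests on Theorem~\ref{thm:succ-PDLcuml}, which the paper states without proof. It also needs more than ``the semantics agree'': the model must be cumulative for $\PDL$ as well. This fails for arbitrary finite relations, because dependence atoms can single out non-smooth sets of states that no $\PL$-formula defines. It holds here because $\pmP$ is a strict partial order on a finite set, so every set of states is smooth.

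For hardness, the paper combines the $\PDL$ lower bound with the inclusion $\TPLPref\subseteq\TPLCum$. In effect it relies on hard instances that are already preferential $\TPL$-models. Your lexicographic-maximum reduction supplies such instances explicitly (a strict total order with singleton-team labels), so your argument is self-contained. Your caution that a $\PDL$ lower bound transfers to the $\TPL$ fragment only if the reduction avoids dependence atoms is exactly the point the paper's one-line transfer leaves implicit.

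To complete your reduction, take $\formulaOne:=F\vee(\neg x_1\wedge\dots\wedge\neg x_k)$ and $\formulaTwo:=x_k$, and let $\mathcal O(s,s')=1$ iff $s$ is lexicographically larger than $s'$. The extra disjunct keeps $\statesOf{\formulaOne}$ non-empty, so an unsatisfiable $F$ does not produce a vacuous positive answer.
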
%
\begin{proof}
     Regarding the first item, similarly as for preferential propositional logic with team semantics, the simpler model checking problem for the underlying team logic $\TPL$---which is in $\Ptime$ by \cite[Tab.~1]{DBLP:conf/sofsem/EbbingL12}---the oracle calls therefore become easier and can be computed on-the-fly in the $\Ptime$-machine. 
     For the hardness, the lower bound from $\NC{1}$ (polynomial sized circuits of logarithmic depth with bounded fan-in; it is contained in nondeterministic logarithmic space) for classical propositional model checking~\cite{DBLP:journals/siamcomp/BussCGR92} applies.

     For item two, we can use $\forall\exists$-nondeterminism to check if there is no minimal state $s$ that violates the entailment.
     That is, the label of $s$ satisfies $\varphi$ but not $\psi$.
     Regarding the lower bound, we can use the $\Delta_2^p$-hardness of $\succinct\ENT(\PDLCum)$ and the fact that $\TPLPref\subseteq\TPLCum$ to obtain a $\Delta_2^p$-lower bound for $\succinct\ENT(\TPLCum)$.
\end{proof} 

\section{Conclusion}
\label{sec:conclusion}

In this paper, we proved complexity results of entailment problems based on cumulative reasoning for propositional dependence logic and propositional logic with team semantics.
These cumulative logics show a diverse landscape as shown in  Figure~\ref{fig:landscape} and now are better understood with respect to their computational complexity.
\begin{figure}[t]
\small
    \begin{tikzpicture}
    \def\mydist{0.8cm}
    \def\mydistP{2*\mydist}
    \def\myheight{0.5cm}
        \begin{scope}
        \node[anchor=west,minimum height=\myheight,draw,rectangle,rounded corners] (TPLPrefStar) at (0,0) {\( \pmLogicX{\TPL}{\logicFont{pref}\ensuremath{[\star]}} \)};
        \node[anchor=west,minimum height=\myheight,draw,rectangle,rounded corners] (TPLPrefCapP) at ([yshift=\mydist]TPLPrefStar.west) {\( \TPLPref \cap \pmLogicX{\TPL}{\logicFont{p}} \)};
        
        \node[anchor=west,minimum height=\myheight,draw,rectangle,rounded corners] (TPLp) at ([yshift=\mydist]TPLPrefCapP.west) {\( \TPLSystemP\)\,{=}\,\( \CPLSystemP \)\,{=}\,\( \CPLPref \)};
        \node[anchor=west,minimum height=\myheight,draw,rectangle,rounded corners] (TPLPref) at ([yshift=\mydist]TPLp.west) {\( \TPLPref \)};
        \node[anchor=west,minimum height=\myheight,draw,rectangle,rounded corners] (TPLc) at ([yshift=\mydist]TPLPref.west) {\( \TPLAsym \)\,{=}\,\( \TPLSystemC \)\,{=}\,\( \CPLSystemC \)\,{=}\,\( \TPLCum \)\,{=}\,\( \CPLCum \)};

        \draw (TPLPrefStar) edge [] (TPLPrefCapP);
        \draw (TPLPrefCapP) edge [dotted,thick] (TPLp);
        \draw (TPLp) edge [dotted,thick] (TPLPref);
        \draw (TPLp) edge [bend right=12] (TPLc);
        \draw (TPLPref) edge [dotted,thick] (TPLc);
        \end{scope}
        \begin{scope}
            \node[anchor=east,minimum height=\myheight,draw,rectangle,rounded corners] (PDLPrefStar) at (\columnwidth,0) {\( \pmLogicX{\PDL}{\logicFont{pref}\ensuremath{[\star]}} \)\,{=}\,\( \pmLogicX{\PDL}{\logicFont{pref}\ensuremath{[\triangle]}} \)\,{=}\,\( \PDLPref \cap \PDLSystemP \)};
        \node[anchor=east,minimum height=\myheight,draw,rectangle,rounded corners,minimum width=1.2cm] (PDLp) at ([xshift=-6em,yshift=\mydistP]PDLPrefStar.east) {\( \PDLSystemP \)};
        
        \node[anchor=east,minimum height=\myheight,draw,rectangle,rounded corners,minimum width=1.2cm] (PDLPref) at ([yshift=\mydistP]PDLPrefStar.east) {\( \PDLPref \)};
        \node[anchor=east,minimum height=\myheight,draw,rectangle,rounded corners] (PDLc) at ([yshift=\mydistP]PDLPref.east) {\( \PDLSystemC \)\,{=}\,\( \PDLCum \)};

        \draw (PDLPrefStar) edge [dotted,thick] (PDLp);
        \draw (PDLPrefStar) edge [] (PDLPref);
        \draw (PDLp) edge [] (PDLc);
        \draw (PDLPref) edge [] (PDLc);
        \end{scope}
    \end{tikzpicture}
    \vspace{-0.5cm}
    \caption{Landscape of classes of entailment relations. 
    Solid lines denote strict inclusions, and dotted lines denote inclusions, where it is unknown whether the inclusion is strict.
    For not defined classes, consult  SMK~({\protect\citeyear{KS_SauerwaldMeierKontinen2025}}).
    }
    \vspace{-1.5em}
    \label{fig:landscape}
\end{figure}

On our future agenda, we plan to find matching upper and lower complexity bounds yielding completeness results. 
Also logics, such as preferential reasoning (KLM \citeyear{KS_KrausLehmannMagidor1990}), c-inference~\cite{KS_Kern-Isberner2001}, lexicographic inference~\cite{KS_Lehmann1995a} or System~W~\cite{KS_KomoBeierle2022}, are interesting regarding complexity.

\bibliography{merged.bib}

\end{document}